\documentclass[conference]{IEEEtran}
\IEEEoverridecommandlockouts
\usepackage{cite}
\usepackage{amsmath,amssymb,amsfonts,amsthm}
\usepackage{graphicx}
\usepackage{textcomp}
\usepackage{xcolor}
\usepackage{booktabs}
\usepackage{tikz}
\usetikzlibrary{arrows.meta,calc,decorations.pathreplacing,positioning}
\definecolor{copenblue}{RGB}{0,102,204}
\definecolor{copendark}{RGB}{0,51,102}
\definecolor{copenred}{RGB}{204,51,51}
\newtheorem{theorem}{Theorem}
\newtheorem{lemma}{Lemma}
\newtheorem{corollary}{Corollary}
\newtheorem{definition}{Definition}

\DeclareMathOperator{\TV}{TV}
\newcommand{\E}{\mathbb{E}}
\newcommand{\calX}{\mathcal{X}}
\newcommand{\calD}{\mathcal{D}}
\newcommand{\calA}{\mathcal{A}}
\newcommand{\dovr}{\bar\Delta_i^{\mathrm{ovr}}}

\begin{document}

\title{Inevitability of Encrypted Traffic Side-Channel Leakage in the Multi-Class Setting}

\author{
\IEEEauthorblockN{Guangjie Liu\textsuperscript{1}, Guang Cheng\textsuperscript{2}, Weiwei Liu\textsuperscript{3}}
\IEEEauthorblockA{\textsuperscript{1}School of Electronic \& Information Eng., Nanjing University of Information Science and Technology, Nanjing, China\\
\textsuperscript{2}School of Cyber Science \& Eng., Southeast University, Nanjing, China\\
\textsuperscript{3}School of Automation, Nanjing University of Science and Technology, Nanjing, China}
}

\maketitle

\begin{abstract}
The Side-Channel Existence Theorem proves $I(X;Y)>0$ in the binary, undefended setting, but is confined to pairwise arguments and ignores active defenses. We extend it to $k$ classes via the per-class decomposition $I(X;Y)=\sum_i\pi_i D_{\mathrm{KL}}(P_{Y|i}\|P_Y)$, with defense cost modelled by per-class Wasserstein-1 constraints $\sup_x W_1(Q_x^D,P_x)\le B$. Three results follow: (1) a summation-form MI lower bound over all active classes; (2) a cascade critical cost theorem and a per-class budget corollary, nonzero where the uniform-budget bound vanishes; (3) an accuracy corollary $\mathrm{Acc}^*\ge 2^{I_0}/k>1/k$. On a 95-class website fingerprinting dataset the measured MI has a strictly positive $95\%$ confidence lower bound under every defense tested. Against the strongest pairwise baseline---a convex program over all $\binom{k}{2}$ triangle constraints, also $\Theta(1)$ in $k$ under the same non-vanishing-gap conditions---the summation form is only $1.45\times$ stronger, so the case for the per-class decomposition is structural: only it gives each class a critical cost and a cascade. FRONT's apparent $122\times$ gap is inflated mainly by threshold exclusion rather than the inequality chain: on the active classes it is $21\times$, within $1.4\times$ of the $15\times$ measured undefended. Measuring the chain's two steps separately bounds the collapse onto one Lipschitz statistic below by $28\times$, against a divergence step measured at $1.5\times$. Undefended OVR distinguishability predicts post-defense per-class leakage at Spearman $\rho=0.62$--$0.77$, the transfer the certification procedure relies on. The framework carries over unchanged to a 100-class QUIC/TCP pair.
\end{abstract}

\begin{IEEEkeywords}
Side-channel leakage, encrypted traffic analysis, website fingerprinting, mutual information, Wasserstein distance, multi-class classification, one-vs-rest distinguishability
\end{IEEEkeywords}

\section{Introduction}

The Side-Channel Existence Theorem \cite{liu2026inevitability} establishes that in efficiency-first encrypted communication systems, the mutual information (MI) between the semantic variable $X$ and the observable $Y$ is strictly positive: $I(X;Y)\ge\frac{1}{2\ln 2}(\frac{\rho[\bar\Delta-2L_\varphi C]}{2})^2>0$, with $\bar\Delta$ a mean gap between the two semantics, $C$ a drift bound, $L_\varphi$ a Lipschitz constant and $\rho$ a retention ratio (all defined in Sec.~\ref{sec:model}). Two limitations are fundamental. First, the framework is binary ($k=2$)---a corollary extends it to ``there exists at least one distinguishable pair,'' still a pairwise argument that does not characterize the $k$-class \emph{classification} capability real attackers have. Second, it does not address active defenses such as padding and morphing, widely deployed in practice.

Multi-class classification is the realistic setting: website fingerprinting targets $k=100$--$1000$ sites \cite{sirinam2018deep,zhao2024finegrained}, application identification dozens of types \cite{shen2023machine}, video fingerprinting single titles \cite{hasselquist2024raisingbar}. Attacks scale with class count---CountMamba \cite{deng2025countmamba} generalizes over closed-world, open-world and defended settings---while defenses advance in step \cite{wang2026frugal,shen2024palette}. None carries an information-theoretic guarantee: attackers do not know how well classification can theoretically do, defenders do not know how low MI can be pushed.

Prior information-theoretic treatments all speak about a \emph{given} system: Li et al.\ \cite{li2018measuring} measured $I(F;W)=6.63$ bits over 100 Tor sites against a $\log_2 100\approx6.64$ ceiling; Cherubin \cite{cherubin2017bayes} certified one defense via Bayes error bounds; FRUGAL \cite{wang2026frugal} constructs defenses driving MI down; and closest to this work, \cite{liu2026ratedistortion} computes the exact minimum of $I(X;Y)$ in a stationary memoryless class at $W_1$ cost $D$. None supplies a converse over \emph{all} of $\calD(B)$: a positive lower bound for every defense meeting the cost constraint. That is what this paper adds---the missing half of a sandwich characterization whose achievability side is \cite{liu2026ratedistortion,wang2026frugal}.

The key insight is to exploit the per-class MI decomposition $I(X;Y)=\sum_{i=1}^k\pi_i D_{\mathrm{KL}}(P_{Y|i}\|P_Y)$, where each term captures whether class $i$ can be \emph{identified} from the population mixture. We introduce per-class one-vs-rest (OVR) distinguishability, model defense cost via Wasserstein-1 constraints, and establish three results:

\begin{enumerate}
\item \textbf{$k$-Class Inevitability Theorem} (Theorem~\ref{thm:main}): a summation-form MI lower bound over all active classes, each contributing independently according to its OVR distinguishability---$1.45\times$ stronger than the best pairwise construction, and unlike it a \emph{per-class} certificate (Sec.~\ref{sec:pairwise}).

\item \textbf{Cascade Critical Cost Theorem} (Theorem~\ref{thm:cascade}): the ordered per-class critical costs are phase-transition thresholds. Its per-class budget corollary (Corollary~\ref{cor:perclass}) replaces the uniform budget by per-class costs $B_i$ and the prior-weighted average $\bar B_{\neg i}$, giving nonzero bounds where the uniform-budget bound vanishes.

\item \textbf{Attack Accuracy Corollary} (Corollary~\ref{cor:acc}): $\mathrm{Acc}^*\ge 2^{I_0}/k>1/k$ for any valid bound $I_0$, with a prior-free per-class balanced-accuracy counterpart for skewed priors.
\end{enumerate}

\textbf{Scope.} The core advance is a shift from ``pairwise distinction'' to ``per-class identification from the mixture,'' under a $W_1$ defense budget. We do not replace attack-specific accuracy analyses (e.g.\ Deep Fingerprinting \cite{sirinam2018deep}, over $98\%$ accurate undefended) or derive the optimal defense within $\calD(B)$. Lemma~\ref{lem:drift}, Theorem~\ref{thm:main} and Corollary~\ref{cor:perclass} are proved in full, Theorem~\ref{thm:cascade} and Corollary~\ref{cor:acc} in sketch. Code and results accompany the paper.

\section{System Model}\label{sec:model}

\subsection{$k$-Class Causal Chain with Defense}

Let the semantic space be $\calX=\{1,\ldots,k\}$ ($k\ge 2$) with prior $\boldsymbol{\pi}=(\pi_1,\ldots,\pi_k)$, $\pi_i>0$ for all $i$. Extending the causal chain of \cite{liu2026inevitability} to the defended $k$-class setting:
\begin{equation}\label{eq:chain}
X \xrightarrow{\ \mathcal{G}_A\ } \Xi_A \xrightarrow{\ \Pi\ } \Xi_P \xrightarrow{\ \Phi_D\ } \Xi_C^D \xrightarrow{\ N\ } \Xi_N^D \xrightarrow{\ \Theta\ } Y_D,
\end{equation}
where the defense $\Phi_D$ is internalized in the encryption layer. Write $z_P,z_C^D,z_N$ for realizations of $\Xi_P,\Xi_C^D,\Xi_N^D$, all in a common trajectory space with metric $d$, and $\varphi_i$ for the statistic reading class $i$---a \emph{family}, since what separates one class from the rest need not separate another; a single $\varphi$ is the case $\varphi_i\equiv\varphi$. The mappings inherit the properties of \cite{liu2026inevitability}, with mapping non-degeneracy now attached to the segment \emph{after} the defense---$\E[d(z_C^D,z_N)\mid X]\le C$, so that $C$ bounds the encryption and transport drift alone and the defense is accounted for once, by its own budget---together with Lipschitz robustness (every $\varphi_i$ is $L_\varphi$-Lipschitz on the trajectory space) and observation non-degeneracy (the observation map retains a fraction $\rho\in(0,1]$ of any mean gap, and every induced observation-layer statistic obeys $\|\psi_i\|_\infty\le M$). The constants $L_\varphi,\rho,M$ are \emph{uniform over the family}, which is what licenses summing per-class bounds from different $\varphi_i$ into one inequality. \cite{liu2026inevitability} takes $M=1$; here $M$ is measured from the observable's support. The defense family is defined via per-class Wasserstein-1 constraints:
\begin{equation}\label{eq:defense-set}
\calD(B)\;:=\;\big\{\Phi_D:\sup_{x\in\calX}W_1(Q_x^D,P_x)\le B\big\},
\end{equation}
where $P_x$ and $Q_x^D$ are the undefended and defended protocol-layer distributions for class $x$, and $B\ge 0$ is the defense budget. At $B=0$ every $\Phi_D\in\calD(0)$ leaves each class-conditional law unchanged ($Q_x^D=P_x$), and the model reduces to the undefended baseline.

$W_1$ is the right cost, and the obvious alternatives fail. By Kantorovich--Rubinstein duality an $L_\varphi$-Lipschitz statistic obeys $|\E_P[\varphi]-\E_Q[\varphi]|\le L_\varphi W_1(P,Q)$, so the constraint bounds exactly the quantity entering the proof chain. KL cannot serve as a budget---padding and morphing move mass onto lengths and timings absent from the undefended support, so $D_{\mathrm{KL}}(Q_x^D\|P_x)=\infty$ for exactly the defenses of interest---and TV saturates at $1$ once supports are disjoint, unable to tell ten bytes per packet from ten kilobytes. $W_1$ is the minimum expected transport cost of turning $P_x$ into $Q_x^D$, linear in the operator's overhead \emph{when the ground metric is that cost}---which our experiments do not instantiate: they use the Euclidean metric on the normalized feature space, so the $\hat B_i$ below are feature-space transport distances, not bandwidth or delay percentages.

\subsection{Per-Class One-vs-Rest Distinguishability}

\begin{definition}[Per-class OVR distinguishability]\label{def:ovr}
For each class $i\in\calX$, define its one-vs-rest distinguishability as
\begin{equation}\label{eq:delta-ovr}
\dovr\;:=\;\big|\E[\varphi_i(z_P)\mid X=i]-\E[\varphi_i(z_P)\mid X\neq i]\big|,
\end{equation}
where $\E[\varphi_i(z_P)\mid X\neq i]=\sum_{j\neq i}\frac{\pi_j}{1-\pi_i}\E[\varphi_i(z_P)\mid X=j]$ is the prior-weighted conditional expectation over non-$i$ classes.
\end{definition}

$\dovr$ measures how far class $i$ deviates from the mixture of all others, the natural quantity for $k$-class identification.

Why OVR rather than pairwise is a structural question. The decomposition $I(X;Y)=\sum_i\pi_i D_{\mathrm{KL}}(P_{Y|i}\|P_Y)$ is indexed by \emph{classes}, each summand a distance to the \emph{mixture} $P_Y$. A pairwise gap $\bar\Delta_{ij}$ constrains neither distance directly; the only step available is the triangle inequality $\TV(P_{Y|i},P_Y)+\TV(P_{Y|j},P_Y)\ge\TV(P_{Y|i},P_{Y|j})$. That route is stronger than it looks---all $\binom{k}{2}$ constraints may be imposed at once, so under the same conditions that make the summation form $\Theta(1)$, a convex program over them is too (Sec.~\ref{sec:pairwise}). What it cannot yield is a \emph{per-class} statement: such constraints certify that a set of pairs jointly leaks, never that class $i$ is identifiable, so they support no critical cost $B_i^*$ and no cascade---the structure Secs.~\ref{sec:main} and \ref{sec:disc} rest on.

Writing $\mu_j:=\E[\varphi_i(z_P)|X=j]$ and $w_j:=\pi_j/(1-\pi_i)$ gives $\dovr=|\sum_{j\neq i}w_j(\mu_i-\mu_j)|$, a \emph{signed} average of pairwise distances. For a ``centroid'' class, whose $\mu_i$ is the weighted mean of all others, the terms cancel and $\dovr=0$ even though the class is pairwise distinguishable from every other, so the sufficient condition says nothing about it---a cancellation invisible in the binary theory, where OVR reduces to $\bar\Delta_{12}$, and confronted with data in Sec.~\ref{sec:disc}.

\subsection{Critical Costs and Active Class Set}

\begin{definition}[Per-class critical cost and active class set]\label{def:critical}
For class $i$ with $\dovr>2L_\varphi C$, define
\begin{equation}\label{eq:Bi-star}
B^*_i\;:=\;\frac{\dovr-2L_\varphi C}{2L_\varphi}.
\end{equation}
The active class set at budget $B$ is $\calA^{\mathrm{cls}}(B):=\{i\in\calX:\dovr>2L_\varphi(C+B)\}$, with effective class count $k_{\mathrm{eff}}(B):=|\calA^{\mathrm{cls}}(B)|$.
\end{definition}

Ordering the critical costs as $B^*_{(1)}\le\cdots\le B^*_{(k')}$ ($k'=k_{\mathrm{eff}}(0)$), the active set falls in a cascade as $B$ grows---one class exiting at each distinct threshold, several at a tied one---until the global critical cost $B^*_{\max}=B^*_{(k')}$ extinguishes every lower bound. Fig.~\ref{fig:main}(a) shows the cascade measured on real traffic.

\section{Main Results}\label{sec:main}

\subsection{$k$-Class Inevitability Theorem}

The whole argument rests on one lemma, which converts the $W_1$ budget into a guaranteed separation at the observation layer.

\begin{lemma}[OVR drift under a $W_1$ budget]\label{lem:drift}
Under mapping non-degeneracy $C$, Lipschitz robustness $L_\varphi$ and observation non-degeneracy $\rho$, for any $\Phi_D\in\calD(B)$ and any class $i$,
\begin{equation}\label{eq:lemma}
\TV\big(P_{Y_D|i},P_{Y_D|X\neq i}\big)\ \ge\ \frac{\rho\,\delta_i(B)}{2M},
\end{equation}
where $\delta_i(B):=\dovr-2L_\varphi(C+B)$ and $M$ bounds the induced observation-layer statistic, $\|\psi_i\|_\infty\le M$. A $1$-Lipschitz $\psi$ on a support of diameter $D$ admits $M=D/2$ after centring, so $M$ is measurable rather than assumed; \cite{liu2026inevitability} takes $M=1$.
\end{lemma}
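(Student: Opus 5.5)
The plan is to push the undefended OVR gap $\dovr$ down the causal chain \eqref{eq:chain}, one layer at a time, losing a controlled amount at each step. At the end we convert the surviving mean gap of a bounded statistic into a total-variation lower bound. Throughout we fix the class $i$ and its statistic $\varphi_i$. We write $\E_i[\cdot]$ for conditioning on $X=i$ and $\E_{\neg i}[\cdot]$ for conditioning on $X\neq i$, the latter being the prior-weighted mixture of Definition~\ref{def:ovr}.

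\emph{Step 1 (defense layer).} Since $\Phi_D\in\calD(B)$, Kantorovich--Rubinstein duality gives $|\E_{Q_x^D}[\varphi_i]-\E_{P_x}[\varphi_i]|\le L_\varphi W_1(Q_x^D,P_x)\le L_\varphi B$ for every $x$. The mixture over $j\neq i$ is a convex combination of such terms, so it also moves by at most $L_\varphi B$. Hence the OVR gap at $z_C^D$ is at least $\dovr-2L_\varphi B$: one $L_\varphi B$ is charged to the class-$i$ side and one to the rest side.

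\emph{Step 2 (drift layer).} Mapping non-degeneracy gives $\E[d(z_C^D,z_N)\mid X]\le C$. By Lipschitz robustness, $|\E[\varphi_i(z_C^D)-\varphi_i(z_N)\mid X=x]|\le L_\varphi C$. Averaging again over the rest mixture, the OVR gap at $z_N$ is at least $\dovr-2L_\varphi B-2L_\varphi C=\delta_i(B)$. If $\delta_i(B)\le 0$ the claim is trivial because TV is nonnegative, so we assume $\delta_i(B)>0$.

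\emph{Step 3 (observation layer).} Observation non-degeneracy supplies an induced statistic $\psi_i$ on the observable with $\|\psi_i\|_\infty\le M$. It satisfies $|\E[\psi_i(Y_D)\mid X=i]-\E[\psi_i(Y_D)\mid X\neq i]|\ge\rho\,\delta_i(B)$. I will take this exactly as the inherited property of \cite{liu2026inevitability}: the map retains a fraction $\rho$ of any mean gap. I will only check that it applies to the mixture conditional $X\neq i$ as well as to single classes. This holds because $P_{Y_D|X\neq i}$ is the push-forward of the mixture law of $z_N$, and the $\Theta$ map does not depend on $X$.

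\emph{Step 4 (TV conversion).} For any measurable $f$ with $\|f\|_\infty\le M$ and any two laws $P,Q$, we have $|\E_P f-\E_Q f|\le 2M\,\TV(P,Q)$. This follows from the variational form $\TV=\sup_{A}|P(A)-Q(A)|$ by writing $f$ as a difference of its positive and negative parts. Equivalently, use $|\int f\,d(P-Q)|\le \|f\|_\infty\|P-Q\|_1$ with $\|P-Q\|_1=2\TV$. Applying this to $\psi_i$ with $P=P_{Y_D|i}$ and $Q=P_{Y_D|X\neq i}$ gives $\TV\ge\rho\,\delta_i(B)/(2M)$, which is \eqref{eq:lemma}. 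The closing remark on $M$ follows separately. A $1$-Lipschitz $\psi$ on a support of diameter $D$, shifted by its midrange value, takes values in $[-D/2,D/2]$, and shifting changes no mean gap.

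The main obstacle I expect is the bookkeeping in Steps 1--2 for the rest side. The defended mixture $P_{\cdot|X\neq i}$ uses the same weights $w_j$ as the undefended one, because the prior is untouched by $\Phi_D$. So the per-class $\sup_x$ budget transfers to the mixture by convexity, and the factor $2$ is the worst case with both sides drifting in opposite directions. A second subtle point is the direction of the inequality in Step 3. The assumption must guarantee a \emph{lower} bound on the retained gap, not merely an upper bound. I will state explicitly that this is how observation non-degeneracy is being used.
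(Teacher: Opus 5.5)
Your proposal is correct and matches the paper's proof. The paper groups the two drift bounds by branch rather than by layer: it charges $L_\varphi(C+B)$ to class $i$ and the same amount to the $w$-weighted rest, subtracts to obtain $\delta_i(B)$, applies the retention factor $\rho$, and converts to total variation through the dual form of TV, just as your Step 4 does. Your explicit treatment of the case $\delta_i(B)\le 0$ and your midrange-centring justification of $M=D/2$ are small clarifications that the paper leaves implicit.
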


\begin{proof}
Fix $i$ and let $\varphi_i$ realize \eqref{eq:delta-ovr}; it is $L_\varphi$-Lipschitz by hypothesis. On the class-$i$ branch the drift of $\E[\varphi_i]$ from $\Xi_P$ to $\Xi_N^D$ splits along the two consecutive segments of \eqref{eq:chain}: over $\Xi_P\!\to\!\Xi_C^D$ the defense moves the class-$i$ law by $W_1(Q_i^D,P_i)\le B$, contributing at most $L_\varphi B$ by duality, and over $\Xi_C^D\!\to\!\Xi_N^D$ mapping non-degeneracy contributes at most $L_\varphi C$. The segments are disjoint, so these add rather than double-count: $|\E[\varphi_i(z_N)|X{=}i]-\E[\varphi_i(z_P)|X{=}i]|\le L_\varphi(C+B)$. On the non-$i$ branch the conditional law is $\sum_{j\neq i}w_jP_{\cdot|j}$ with $w_j=\pi_j/(1-\pi_i)$ summing to $1$, and the same bound holds for every $j$, so it survives the averaging. Subtracting, the network-layer OVR gap is at least $\dovr-2L_\varphi(C+B)=\delta_i(B)$, of which observation non-degeneracy retains a fraction $\rho$. The induced statistic $\psi_i$ then carries a mean gap of at least $\rho\,\delta_i(B)$, and the bounded-statistic lemma of \cite{liu2026inevitability}---$\|f\|_\infty\le M$ and $|\E_Pf-\E_Qf|\ge\delta$ imply $\TV(P,Q)\ge\delta/(2M)$, by the dual form $\TV=\tfrac12\sup_{\|g\|_\infty\le1}|\E_Pg-\E_Qg|$---gives \eqref{eq:lemma}.
\end{proof}

\begin{theorem}[$k$-Class Inevitability]\label{thm:main}
Let $\calX=\{1,\ldots,k\}$ ($k\ge 2$) with $\pi_i>0$ for all $i$. Under the conditions of Lemma~\ref{lem:drift}, if $\calA^{\mathrm{cls}}(B)\neq\varnothing$, then for any $\Phi_D\in\calD(B)$, $I(X;Y_D)>0$ with
\begin{multline}\label{eq:main-bound}
I(X;Y_D)\ge\frac{2}{\ln 2}\!\sum_{i\in\calA^{\mathrm{cls}}(B)}\!\pi_i(1\!-\!\pi_i)^2
\left(\frac{\rho\,\delta_i(B)}{2M}\right)^{\!2}.
\end{multline}
\end{theorem}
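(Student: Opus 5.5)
The plan is to start from the per-class decomposition $I(X;Y_D)=\sum_{i=1}^k\pi_i D_{\mathrm{KL}}(P_{Y_D|i}\|P_{Y_D})$, in which every summand is nonnegative. This lets us discard all classes outside $\calA^{\mathrm{cls}}(B)$ and lower-bound each remaining term separately. For each active class the key identity is that the mixture splits along class $i$ and the rest: $P_{Y_D}=\pi_iP_{Y_D|i}+(1-\pi_i)P_{Y_D|X\neq i}$. Hence
\[
P_{Y_D|i}-P_{Y_D}=(1-\pi_i)\big(P_{Y_D|i}-P_{Y_D|X\neq i}\big),
\]
and total variation, being a norm of the signed difference, scales linearly:
\[
\TV\big(P_{Y_D|i},P_{Y_D}\big)=(1-\pi_i)\,\TV\big(P_{Y_D|i},P_{Y_D|X\neq i}\big).
\]
This is the step that converts the OVR separation into separation from the mixture, and it is where the factor $(1-\pi_i)^2$ will come from.

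Next we apply Pinsker's inequality in bits, $D_{\mathrm{KL}}(P\|Q)\ge\frac{2}{\ln 2}\TV(P,Q)^2$, which gives
\[
D_{\mathrm{KL}}(P_{Y_D|i}\|P_{Y_D})\ge\frac{2}{\ln 2}(1-\pi_i)^2\,\TV\big(P_{Y_D|i},P_{Y_D|X\neq i}\big)^2 .
\]
We then invoke Lemma~\ref{lem:drift} for this $\Phi_D\in\calD(B)$. Since $i\in\calA^{\mathrm{cls}}(B)$ means $\delta_i(B)>0$, the lemma gives $\TV\ge\rho\,\delta_i(B)/(2M)>0$, and squaring preserves the inequality because both sides are nonnegative. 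Multiplying by $\pi_i$ and summing over the active set yields \eqref{eq:main-bound}.

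Strict positivity follows because $\calA^{\mathrm{cls}}(B)\neq\varnothing$, $\pi_i>0$, $\pi_i<1$ (since $k\ge2$ and all priors are positive, so $1-\pi_i>0$), $\rho>0$ and $\delta_i(B)>0$. Together these make at least one summand strictly positive.

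The main obstacle is ensuring the per-class bounds may legitimately be summed even though each uses a different statistic $\varphi_i$. This holds because each KL term is bounded independently and only the uniform constants $L_\varphi,\rho,M$ appear. It is worth stating explicitly that the constants are uniform over the family, as the model assumes. A secondary check is that the TV convention in Lemma~\ref{lem:drift} (the half-$L^1$ form) matches the one used in Pinsker, so that no factor of $2$ is lost.
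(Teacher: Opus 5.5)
Your proposal is correct and follows the paper's proof essentially step for step. It uses the per-class KL decomposition, the exact mixture identity $\TV(P_{Y_D|i},P_{Y_D})=(1-\pi_i)\,\TV(P_{Y_D|i},P_{Y_D|X\neq i})$, the Lemma~\ref{lem:drift} floor, Pinsker in bits, and then drops the nonnegative inactive summands; applying Pinsker before rather than after the lemma is immaterial, and your positivity and half-$L^1$ TV-convention checks match the paper's setup.
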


\begin{proof}
Decompose $I(X;Y_D)=\sum_{i=1}^k\pi_iD_{\mathrm{KL}}(P_{Y_D|i}\|P_{Y_D})$, an identity. Fix an active class $i$. Writing the marginal as the two-component mixture $P_{Y_D}=\pi_iP_{Y_D|i}+(1-\pi_i)P_{Y_D|X\neq i}$ gives $P_{Y_D|i}-P_{Y_D}=(1-\pi_i)(P_{Y_D|i}-P_{Y_D|X\neq i})$ pointwise, hence the \emph{identity}
\begin{equation}\label{eq:mixture-id}
\TV(P_{Y_D|i},P_{Y_D})=(1-\pi_i)\,\TV(P_{Y_D|i},P_{Y_D|X\neq i}),
\end{equation}
no inequality having been used yet. Lemma~\ref{lem:drift} bounds the right-hand factor below by $\rho\delta_i(B)/(2M)$, and $\delta_i(B)>0$ precisely because $i\in\calA^{\mathrm{cls}}(B)$. Pinsker's inequality in bits, $D_{\mathrm{KL}}(P\|Q)\ge\frac{2}{\ln 2}\TV(P,Q)^2$, then yields
\begin{equation}\label{eq:perterm}
D_{\mathrm{KL}}(P_{Y_D|i}\|P_{Y_D})\ \ge\ \frac{2}{\ln 2}(1-\pi_i)^2\Big(\frac{\rho\,\delta_i(B)}{2M}\Big)^{\!2}.
\end{equation}
Multiplying by $\pi_i$ and summing over $i\in\calA^{\mathrm{cls}}(B)$ gives \eqref{eq:main-bound}: the discarded terms are the inactive ones, each a nonnegative $\pi_iD_{\mathrm{KL}}(\cdot\|\cdot)$, so dropping them preserves the inequality, and $\calA^{\mathrm{cls}}(B)\neq\varnothing$ leaves at least one strictly positive summand.
\end{proof}

The essential distinction from the binary theorem is the \emph{summation structure}: each active class contributes independently. Under a uniform prior with a common residual $\delta:=\bar\Delta^{\mathrm{ovr}}-2L_\varphi(C+B)$, the sum has $k$ terms of size $\frac{2}{k\ln 2}(1-1/k)^2[\rho\delta/2M]^2$ and increases to $\frac{2}{\ln 2}[\rho\delta/2M]^2=\Theta(1)$ as $k\to\infty$: the guarantee does not dilute as the problem grows. At $k=2$ the two summands combine as $\pi_1(1-\pi_1)^2+\pi_2(1-\pi_2)^2=\pi_1\pi_2$, so \eqref{eq:main-bound} becomes $\frac{2}{\ln2}\pi_1\pi_2[\rho\delta/2M]^2$, $\rho\delta/2M$ being the Lemma~\ref{lem:drift} floor on $\TV(P_{Y_D|1},P_{Y_D|2})$---structurally the binary bound $\frac{2}{\ln2}P(x)P(x')\TV^2$ of \cite{liu2026inevitability}, recovered for \emph{every} prior rather than only the balanced one. The content of Theorem~\ref{thm:main} therefore lies in the regime $k\ge3$.

\subsection{Cascade Critical Cost Theorem}

\begin{theorem}[Cascade Critical Cost]\label{thm:cascade}
Under the conditions of Theorem~\ref{thm:main}, let $k'=k_{\mathrm{eff}}(0)\ge 1$ and order the critical costs as $B^*_{(1)}\le\cdots\le B^*_{(k')}$. Then: (i)~$k_{\mathrm{eff}}(B)=k'-|\{m:B^*_{(m)}\le B\}|$; (ii)~$B^*_{\max}:=B^*_{(k')}$ is the minimum budget to extinguish all bounds; (iii)~the MI lower bound on each interval $[B^*_{(m)},B^*_{(m+1)})$ is a sum of $k'-m$ quadratic functions of $B$, joined $C^1$ at each threshold, where the curvature jumps.
\end{theorem}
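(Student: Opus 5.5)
The whole argument rests on one observation: by Definition~\ref{def:critical}, class $i$ lies in $\calA^{\mathrm{cls}}(B)$ if and only if $\delta_i(B)=\dovr-2L_\varphi(C+B)>0$. Since $\delta_i(B)$ is affine and strictly decreasing in $B$, this holds if and only if $B<B^*_i$. First I will record that every $i\in\calA^{\mathrm{cls}}(0)$ has $B^*_i>0$. I will also note that $\calA^{\mathrm{cls}}(B)\subseteq\calA^{\mathrm{cls}}(0)$ for $B\ge0$, because the defining threshold only rises with $B$. So only the $k'$ classes with defined critical costs ever matter, and all three parts reduce to bookkeeping on the ordered list $B^*_{(1)}\le\cdots\le B^*_{(k')}$.

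For (i), I count. Among the $k'$ initially active classes, class $(m)$ has left the active set exactly when $B^*_{(m)}\le B$. Hence $k_{\mathrm{eff}}(B)=k'-|\{m:B^*_{(m)}\le B\}|$. Ties need no special treatment, since the counting set absorbs them.

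For (ii), there are two directions. At $B=B^*_{(k')}$, every $B^*_{(m)}\le B$, so by (i) the active set is empty and the bound \eqref{eq:main-bound} is the empty sum. For any $B<B^*_{(k')}$, class $(k')$ is still active, so Theorem~\ref{thm:main} gives a strictly positive bound. Minimality follows.

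For (iii), I will rewrite each summand using $\delta_i(B)=2L_\varphi(B^*_i-B)$. The bound then becomes
\[
g(B)=\sum_{i\in\calA^{\mathrm{cls}}(0)} c_i\,\big[(B^*_i-B)_+\big]^2,\qquad c_i:=\frac{2}{\ln 2}\,\pi_i(1-\pi_i)^2\Big(\frac{\rho L_\varphi}{M}\Big)^{2}.
\]
On $[B^*_{(m)},B^*_{(m+1)})$, exactly $k'-m$ positive parts are live by (i), so $g$ is a sum of $k'-m$ quadratics there. Each map $B\mapsto[(B^*_i-B)_+]^2$ is $C^1$: its value and its one-sided derivatives, $-2(B^*_i-B)$ and $0$, both vanish at $B^*_i$. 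Its second derivative drops from $2$ to $0$ there. Summing, $g$ is $C^1$ everywhere. At each threshold, $g''$ jumps down by $2\sum c_i$, the sum running over the classes exiting there, which is where the curvature jumps.

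The only delicate step I expect is the interval convention under ties. If $B^*_{(m)}=B^*_{(m+1)}$, the interval $[B^*_{(m)},B^*_{(m+1)})$ is empty and the statement is vacuous. I will handle this by indexing by distinct threshold values, so that several classes exit at a tied threshold, consistent with the cascade description after Definition~\ref{def:critical}. I will also state explicitly that ``extinguish'' refers to the certified lower bound \eqref{eq:main-bound}, not to $I(X;Y_D)$ itself. The theorem makes no claim that the MI vanishes beyond $B^*_{\max}$.
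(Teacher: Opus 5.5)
Your proof is correct and follows the paper's own sketch. Part (i) is read off Definition~\ref{def:critical}, part (ii) follows from the sum in \eqref{eq:main-bound} becoming empty, and part (iii) follows from each summand being a quadratic that vanishes to second order at its own threshold; your positive-part rewriting via $\delta_i(B)=2L_\varphi(B^*_i-B)$, the converse (minimality) direction in (ii), and the explicit tie convention make rigorous what the sketch leaves implicit.
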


\textit{Proof sketch of Theorem~\ref{thm:cascade}.} Part~(i) is Definition~\ref{def:critical}: class $(m)$ exits when $B\ge B^*_{(m)}$. Part~(ii) follows since for $B\ge B^*_{\max}$ the sum in \eqref{eq:main-bound} is empty. Part~(iii) holds because $\calA^{\mathrm{cls}}(B)$ is constant on each interval and every summand is quadratic in $B$, each vanishing quadratically at its own threshold, so the sum is continuously differentiable there and only its second derivative jumps. \hfill$\square$

\begin{corollary}[Per-class budget]\label{cor:perclass}
Let $B_i:=W_1(Q_i^D,P_i)$, $\bar B_{\neg i}:=\sum_{j\neq i}\frac{\pi_j}{1-\pi_i}B_j$, and write $\delta_i^{\mathrm{pc}}:=\dovr-L_\varphi(C+B_i)-L_\varphi(C+\bar B_{\neg i})$ for the per-class residual, $\calA^{\mathrm{pc}}:=\{i:\delta_i^{\mathrm{pc}}>0\}$ and $k^{\mathrm{pc}}_{\mathrm{eff}}:=|\calA^{\mathrm{pc}}|$. If $\calA^{\mathrm{pc}}\neq\varnothing$, then
\begin{multline}\label{eq:perclass-bound}
I(X;Y_D)\ge\frac{2}{\ln 2}\!\sum_{i\in\calA^{\mathrm{pc}}}\!\pi_i(1\!-\!\pi_i)^2\\
\times\left(\frac{\rho\,\delta_i^{\mathrm{pc}}}{2M}\right)^{\!2}.
\end{multline}
Theorem~\ref{thm:main} is the special case $B_i\equiv B$, so \eqref{eq:perclass-bound} is the operative form throughout; we write $I_0^{\mathrm{pc}}$ for its right-hand side.
\end{corollary}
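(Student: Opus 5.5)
The plan is to rerun the proof of Theorem~\ref{thm:main} unchanged and replace only the single input that used the uniform budget, namely Lemma~\ref{lem:drift}. Concretely, I will prove a per-class version of the lemma: for every $i$ with $\delta_i^{\mathrm{pc}}>0$,
\[
\TV\big(P_{Y_D|i},P_{Y_D|X\neq i}\big)\ \ge\ \frac{\rho\,\delta_i^{\mathrm{pc}}}{2M}.
\]
Given this, the rest is verbatim. The per-class MI decomposition, the mixture identity \eqref{eq:mixture-id}, Pinsker in bits, multiplication by $\pi_i$, and dropping the nonnegative terms with $i\notin\calA^{\mathrm{pc}}$ together yield \eqref{eq:perclass-bound}. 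The hypothesis $\calA^{\mathrm{pc}}\neq\varnothing$ then ensures at least one summand is strictly positive.

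For the per-class drift bound I follow the proof of Lemma~\ref{lem:drift} branch by branch, but I keep the two branches' budgets separate instead of bounding both by $B$.

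On the class-$i$ branch, Kantorovich--Rubinstein duality applied to the $L_\varphi$-Lipschitz $\varphi_i$ over the segment $\Xi_P\to\Xi_C^D$ gives a drift of at most $L_\varphi W_1(Q_i^D,P_i)=L_\varphi B_i$. Mapping non-degeneracy over $\Xi_C^D\to\Xi_N^D$ adds at most $L_\varphi C$. Hence $|\E[\varphi_i(z_N)|X{=}i]-\E[\varphi_i(z_P)|X{=}i]|\le L_\varphi(C+B_i)$.

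On the non-$i$ branch, the same argument for each $j\neq i$ gives a drift of at most $L_\varphi(C+B_j)$. The non-$i$ conditional mean is the $w_j$-weighted average of the class-$j$ means, so by the triangle inequality its drift is at most $\sum_{j\neq i}w_jL_\varphi(C+B_j)$. Because $\sum_{j\neq i} w_j=1$, this equals $L_\varphi(C+\bar B_{\neg i})$.

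The step needing the most care is this averaging. It is the one place where the per-class form genuinely differs from Lemma~\ref{lem:drift}, since there the bound was uniform in $j$ and survived averaging trivially. The argument must apply the triangle inequality to the signed differences of means, not to $W_1$ of the mixture. It must also note that $C$ bounds the conditional drift $\E[d(z_C^D,z_N)\mid X]$ for every class, so the constant term averages to exactly $L_\varphi C$.

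Subtracting the two branch bounds, the network-layer OVR gap is at least $\dovr-L_\varphi(C+B_i)-L_\varphi(C+\bar B_{\neg i})=\delta_i^{\mathrm{pc}}$. Observation non-degeneracy retains a fraction $\rho$ of this gap in the induced statistic $\psi_i$. Since $\|\psi_i\|_\infty\le M$, the bounded-statistic lemma of \cite{liu2026inevitability} gives the displayed TV bound.

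Finally, I check the special case. If $B_i\equiv B$, then $\bar B_{\neg i}=B$ and $\delta_i^{\mathrm{pc}}=\delta_i(B)$, so $\calA^{\mathrm{pc}}=\calA^{\mathrm{cls}}(B)$ and \eqref{eq:perclass-bound} reduces to \eqref{eq:main-bound}. This confirms that Theorem~\ref{thm:main} is the special case claimed in the statement.
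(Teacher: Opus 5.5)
Your proposal is correct and follows the paper's proof essentially step for step. The only change to Lemma~\ref{lem:drift} is keeping the class-$i$ drift at $L_\varphi(C+B_i)$ and bounding the mixture-side drift by the $w$-average $L_\varphi(C+\bar B_{\neg i})$ of the per-class drifts; after that, the proof of Theorem~\ref{thm:main} runs verbatim over $\calA^{\mathrm{pc}}$.

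One small precision for your closing check: a defense in $\calD(B)$ only has $B_i\le B$. Recovering Theorem~\ref{thm:main} for all such defenses therefore also uses that $\delta_i^{\mathrm{pc}}$ is nonincreasing in each $B_j$, so that $\delta_i^{\mathrm{pc}}\ge\delta_i(B)$, $\calA^{\mathrm{cls}}(B)\subseteq\calA^{\mathrm{pc}}$, and $I_0^{\mathrm{pc}}$ dominates the right-hand side of \eqref{eq:main-bound}.
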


\begin{proof}[Proof of Corollary~\ref{cor:perclass}]
Only the two drift bounds inside Lemma~\ref{lem:drift} change. On the class-$i$ branch the budget is $B_i$, so $|\E[\varphi_i(z_N)|X{=}i]-\E[\varphi_i(z_P)|X{=}i]|\le L_\varphi(C+B_i)$ exactly as before. On the non-$i$ branch the conditional law is $\sum_{j\neq i}w_jP_{\cdot|j}$ with $w_j=\pi_j/(1-\pi_i)$, so the drift of its mean is the $w$-average of the per-class drifts, and
\begin{equation}\label{eq:convex-step}
\big|\mu^N_{\neg i}-\mu^P_{\neg i}\big|\le\sum_{j\neq i}w_j L_\varphi(C+B_j)=L_\varphi(C+\bar B_{\neg i}),
\end{equation}
using $\sum_{j\neq i}w_j=1$ and the definition of $\bar B_{\neg i}$. This is where the corollary gains: the mixture side is driven by the prior-weighted \emph{average} of the other classes' costs, not their maximum. Subtracting the two drifts replaces the symmetric threshold $2L_\varphi(C+B)$ by $L_\varphi(C+B_i)+L_\varphi(C+\bar B_{\neg i})$; the rest of the proof of Theorem~\ref{thm:main} applies verbatim with $\calA^{\mathrm{pc}}$ in place of $\calA^{\mathrm{cls}}(B)$.
\end{proof}

Consequently the per-class threshold is never larger than the uniform-budget threshold $2L_\varphi(C+\max_jB_j)$, and is strictly smaller as soon as the costs are heterogeneous---the norm in practice, since padding defenses impose different per-class costs depending on each site's traffic profile.

\subsection{Attack Accuracy Corollary}

\begin{corollary}[Attack accuracy bound]\label{cor:acc}
Let $I_0$ be any valid lower bound on $I(X;Y_D)$---in particular the right-hand side of \eqref{eq:main-bound} or, for heterogeneous costs, $I_0^{\mathrm{pc}}$ of \eqref{eq:perclass-bound}. The Bayes-optimal $k$-class accuracy satisfies
\begin{equation}\label{eq:acc-bound}
\mathrm{Acc}^*\;\ge\;\max\Big\{\max_i\pi_i,\ \frac{2^{I_0}}{k}\Big\},
\end{equation}
which exceeds $1/k$ whenever $I_0>0$, i.e.\ whenever some class remains active.
\end{corollary}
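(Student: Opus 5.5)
The plan is to prove the two lower bounds in \eqref{eq:acc-bound} separately and then take their maximum. The first term, $\max_i\pi_i$, is immediate: the constant classifier that outputs $\arg\max_i\pi_i$ while ignoring $Y_D$ is one admissible rule. It achieves accuracy $\max_i\pi_i$, and the Bayes-optimal rule can do no worse. All of the work is in the second term. I would obtain it through the standard comparison between conditional min-entropy and conditional Shannon entropy, rather than through Fano's inequality, because that route yields exactly the clean form $2^{I_0}/k$.

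The steps, in order, are as follows. First, write the Bayes accuracy as $\mathrm{Acc}^*=\E_{Y_D}\big[\max_x P(x\mid Y_D)\big]$. Second, for each fixed $y$, use the elementary fact that a distribution $p$ satisfies $H(p)=\sum_x p_x\log_2(1/p_x)\ge\log_2(1/\max_x p_x)$, that is, $\max_x p_x\ge 2^{-H(p)}$. Applied pointwise, this gives $\mathrm{Acc}^*\ge\E_{Y_D}\big[2^{-H(X\mid Y_D=y)}\big]$. Third, since $t\mapsto 2^{-t}$ is convex, Jensen's inequality moves the expectation into the exponent, giving $\mathrm{Acc}^*\ge 2^{-H(X\mid Y_D)}$. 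Fourth, substitute $H(X\mid Y_D)=H(X)-I(X;Y_D)$ and $H(X)\le\log_2 k$ to get $\mathrm{Acc}^*\ge 2^{I(X;Y_D)}/k$. Fifth, since $2^{t}$ is increasing, any valid lower bound $I_0\le I(X;Y_D)$ gives $\mathrm{Acc}^*\ge 2^{I_0}/k$. In particular this holds for $I_0$ taken from Theorem~\ref{thm:main} or from Corollary~\ref{cor:perclass}, applied to every $\Phi_D\in\calD(B)$.

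For the strict inequality, suppose some class is active, i.e.\ $\calA^{\mathrm{cls}}(B)\ne\varnothing$ (respectively $\calA^{\mathrm{pc}}\ne\varnothing$). Then the corresponding right-hand side contains a strictly positive summand, so $I_0>0$ and hence $2^{I_0}/k>1/k$.

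I do not expect a genuine obstacle. The only step needing care is the direction of Jensen's inequality in the third step, which is justified by the convexity of $2^{-t}$. It is also worth remarking that the bound is loose when the prior is skewed: there $H(X)<\log_2 k$, and the $\max_i\pi_i$ term may dominate. This is precisely why both terms are kept in the maximum.
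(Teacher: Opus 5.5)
Your proposal is correct and follows the paper's own sketch step for step: majority-class guessing for $\max_i\pi_i$, and for the second term the pointwise inequality $\max_x P(x\mid y)\ge 2^{-H(X\mid Y_D=y)}$, Jensen through the convexity of $2^{-t}$, then $H(X\mid Y_D)\le\log_2 k-I_0$. You also spell out the one-line proof of the posterior-maximum inequality and why a nonempty active set forces $I_0>0$; the paper leaves the first to its citation of Cover--Thomas and the second to the surrounding text.
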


\textit{Proof sketch.} From $I(X;Y_D)\ge I_0$, $H(X|Y_D)\le\log_2 k-I_0$; the posterior-maximum inequality $\max_iP(i|y)\ge 2^{-H(X|Y_D=y)}$ \cite{cover2006elements} with Jensen's inequality gives $\mathrm{Acc}^*\ge 2^{-H(X|Y_D)}\ge 2^{I_0}/k$. The term $\max_i\pi_i$ is the trivial majority-class guess. \hfill$\square$

The factor $2^{I_0}>1$ is the multiplicative improvement over blind guessing. The $\max$ in \eqref{eq:acc-bound} matters: under a skewed prior---Zipf-like page popularity---$\max_i\pi_i$ dominates $2^{I_0}/k$ and the bound says nothing about \emph{which} class is identified. There the operational statement is the per-class guarantee Lemma~\ref{lem:drift} yields directly. The optimal test of ``$X=i$'' against ``$X\neq i$'' attains balanced accuracy---the mean of its true-positive and true-negative rates---at least
\begin{equation}\label{eq:acc-ovr}
\mathrm{Acc}^{\mathrm{ovr}}_i\;\ge\;\tfrac{1}{2}\big(1+\TV(P_{Y_D|i},P_{Y_D|X\neq i})\big)\;\ge\;\tfrac{1}{2}\Big(1+\frac{\rho\,\delta_i}{2M}\Big),
\end{equation}
the first step the standard identity for the Bayes error of a two-hypothesis test under equal weighting, the second Lemma~\ref{lem:drift} with $\delta_i=\delta_i(B)$, or its per-class counterpart $\delta_i=\delta_i^{\mathrm{pc}}$ under heterogeneous realized costs---the version instantiated in Sec.~\ref{sec:main-verif}. Balanced accuracy is prior-free by construction, so \eqref{eq:acc-ovr} survives prior skew, and it is the question a defender of one high-value site actually asks. The counts $k_{\mathrm{eff}}(B)$ and $k^{\mathrm{pc}}_{\mathrm{eff}}$ are exactly how many classes make \eqref{eq:acc-ovr} nontrivial, under a uniform budget and under realized per-class costs; both fall to zero as the defense grows.

\begin{figure*}[t]
\centering
\includegraphics[width=\textwidth]{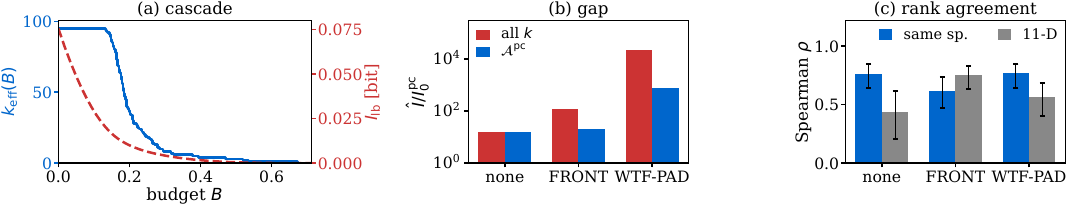}
\caption{Cascade, gap decomposition and rank agreement on CW}\label{fig:main}
\end{figure*}

\section{Empirical Validation}\label{sec:empirical}

\subsection{Protocol}\label{sec:protocol}

Experiments use the closed-world dataset of \cite{sirinam2018deep} (95 sites, 200 traces each, 19\,000 total) undefended and under three defenses---FRONT \cite{gong2020front} (front-loaded dummy injection), WTF-PAD \cite{juarez2016wtfpad} (adaptive padding), TrafficSliver \cite{cadena2020trafficsliver} (multipath splitting)---plus the QUIC/TCP pair of \cite{smith2021quic} (100 sites, 100 traces each, 10\,000 per stack), testing whether the framework survives a change of protocol stack. Each trace is a signed-time packet sequence $\{(\pm 1,t_j)\}$ yielding a 3-D observable $Y^{(3)}$ (inbound ratio, mean IAT, direction flip rate) for MI, and an 11-D $Y^{(11)}$ (those three plus packet count, flow duration, IAT standard deviation, IAT 25th and 75th percentiles, burst count, mean burst length, cumulative direction at the 30\% position) for $W_1$ quantities. Coordinates are mapped into $[0,1]$ by the undefended 99th percentiles (direction ratios already lie there; cumulative direction is rescaled affinely). $Y^{(3)}$ is exactly the first three coordinates of $Y^{(11)}$, so the data-processing inequality invoked below applies.

The measured MI, $\hat I_{\mathrm{MM}}$, is the plug-in estimator with Miller--Madow correction on $Y^{(3)}$ (5 bins per dimension, $5^3=125$ symbols on each dataset's own range; 200 stratified bootstrap replicates, percentile 95\% intervals bias-shifted). Distinguishability and cost are exact earth-mover distances on $Y^{(11)}$: $\hat{\bar\Delta}_i^{\mathrm{ovr}}=W_1(\hat P_{Y^{(11)}|i},\hat P_{Y^{(11)}|\neg i})$, the $\neg i$ mixture subsampled to 2\,000 traces, and $\hat B_i$ the defended-to-baseline $W_1$ per class. By duality $W_1$ selects the optimal $1$-Lipschitz statistic, so $\hat{\bar\Delta}_i^{\mathrm{ovr}}$ is the tightest instantiation of Definition~\ref{def:ovr} the sample allows. The optimizer differs by class---no single $\varphi$ realizes all 95---which is why Definition~\ref{def:ovr} ranges over a family $\{\varphi_i\}$ with constants uniform across it.

Walkie-Talkie \cite{wang2017walkietalkie} was measured but excluded from Table~\ref{tab:main}: its half-duplex traces fill only 33 of the 125 symbols and the estimate is unstable (CI $[0.67,1.37]$).

The constant $M$ of Lemma~\ref{lem:drift} is measured, not assumed, by one rule in whichever space the $W_1$ quantities are estimated---half the exact diameter of the support entering them: $2.8323/2=1.4161$ in $Y^{(11)}$, $1.3344/2=0.667$ in $Y^{(3)}$. Each space needs its own, since $M$ must dominate the statistic \emph{in that space} for \eqref{eq:lemma} to hold: importing the $Y^{(11)}$ constant into $Y^{(3)}$ loosens the bound there by $4.5\times$, and taking $M=1$ as the binary theorem does would inflate the $Y^{(11)}$ bound by $2.00\times$ and deflate the $Y^{(3)}$ one by $2.25\times$. Proxy parameters are fixed at $C=0$, $L_\varphi=1$, $\rho=1$. Both $C=0$ and $\rho=1$ \emph{maximize} the bound, making the empirical test strictest---but not conservative in the sense of guaranteed validity, which would need $C$ above and $\rho$ below their true values, and the plug-in estimates carry no one-sided finite-sample control. Table~\ref{tab:main} is a stress test, not a certificate. $L_\varphi=1$ is not conservative but self-consistent: a statistic realizing a $W_1$ distance is $1$-Lipschitz, and since $\hat{\bar\Delta}_i^{\mathrm{ovr}}$ and $\hat B_i$ are both $W_1$ in the same normalized space, the residual is scale-consistent only at $L_\varphi=1$.

One asymmetry governs how Table~\ref{tab:main} may be read. Its bound uses the 11-D $\dovr$, so it bounds $I(X;Y^{(11)})$, while the measured column is $\hat I_{\mathrm{MM}}$ on $Y^{(3)}$ and $I(X;Y^{(3)})\le I(X;Y^{(11)})$: the verification is \emph{sufficient but not necessary}, clearing a bar set for a richer observable with a poorer one. The last column is correspondingly \emph{not} a tightness measure; the like-for-like comparison, both sides in $Y^{(3)}$, is in Sec.~\ref{sec:gap} and is $2.8\times$ less favourable.

\subsection{Main Verification}\label{sec:main-verif}

Table~\ref{tab:main} reports the measured MI against Corollary~\ref{cor:perclass}. Undefended, the $W_1$ OVR spectrum spans $0.261$--$1.343$ (mean $0.429$), all 95 classes are active, and the ordered critical costs run from $\hat B^*_{(1)}=0.130$ to $\hat B^*_{\max}=0.671$---a $5.1\times$ spread, the cascade of Fig.~\ref{fig:main}(a). Every defense drives $\max_i\hat B_i$ above $\hat B^*_{\max}$, so Theorem~\ref{thm:main}'s uniform-budget bound vanishes in all three defended settings; the per-class budget still keeps 7 classes active under FRONT and 1 under WTF-PAD---the difference between a vacuous and a nonvacuous bound. The surviving mass is concentrated: one class ($\hat{\bar\Delta}^{\mathrm{ovr}}=1.343$, $\hat B_i=0.169$, $\delta^{\mathrm{pc}}=0.789$) supplies $63\%$ of the bound, while the two weakest survivors, clearing their thresholds by $0.011$ and $0.012$, contribute $0.03\%$ between them---the summand is quadratic in $\delta^{\mathrm{pc}}$, so the ranking follows the residual, not raw distinguishability. Restated per class through \eqref{eq:acc-ovr}: even under FRONT, deciding whether a trace belongs to the leading class has \emph{balanced} accuracy $\ge0.639$ against $0.5$ for a coin, $0.737$ undefended. Corollary~\ref{cor:acc}'s $k$-class statement is far weaker---$\mathrm{Acc}^*\ge0.0111$ against $1/k=0.0105$ undefended---since $2^{I_0}\approx1$ whenever $I_0\ll1$ bit. A total of $3.6\times10^{-3}$ bit across a 95-class problem is negligible; the same bound read class by class is not.

Every measured MI has a strictly positive 95\% CI lower bound, even where the theory gives nothing. The ordering is informative: FRONT leaves $0.445$ bit at mean cost $0.383$, WTF-PAD $0.585$ at $0.709$, TrafficSliver $0.482$ at $0.791$---the cheapest defense here is also the least leaky, and what TrafficSliver's extra spend buys over FRONT is not a lower MI but the collapse of the certifiable set. Finally, the QUIC/TCP pair behaves like the CW baseline: all 100 classes active, gaps of $8.3\times$--$8.9\times$, bounds agreeing to $0.03\%$ despite QUIC's longer tail ($\max_i\hat{\bar\Delta}_i^{\mathrm{ovr}}=1.715$ against $1.629$) and lower floor ($0.359$ against $0.371$)---the sum over 100 classes averages such differences out. Each stack is normalized by its own percentiles, so the comparison spans two scales.

\begin{table}[tb]
\caption{Measured MI versus theoretical lower bounds}\label{tab:main}
\begin{center}
\setlength{\tabcolsep}{2.6pt}
\begin{tabular}{@{}lccccc@{}}
\toprule
Setting & $\max_i\hat B_i$ & $k^{\mathrm{pc}}_{\mathrm{eff}}$ & $I_0^{\mathrm{pc}}$ & $\hat I_{\mathrm{MM}}$ [95\% CI] & gap \\
\midrule
\multicolumn{6}{l}{\emph{CW, $k=95$}}\\
none          & --    & 95 & $7.55\!\times\!10^{-2}$ & 1.136 [1.108, 1.169] & $15.0$ \\
FRONT         & 1.201 & 7  & $3.64\!\times\!10^{-3}$ & 0.445 [0.428, 0.460] & $122$ \\
WTF-PAD       & 1.095 & 1  & $2.55\!\times\!10^{-5}$ & 0.585 [0.568, 0.604] & $2.3\!\times\!10^{4}$ \\
TrafficSliver & 1.040 & 0  & $0$                     & 0.482 [0.439, 0.543] & -- \\
\midrule
\multicolumn{6}{l}{\emph{Smith, $k=100$, undefended}}\\
TCP           & --    & 100 & $1.29\!\times\!10^{-1}$ & 1.152 [1.114, 1.235] & $8.9$ \\
QUIC          & --    & 100 & $1.29\!\times\!10^{-1}$ & 1.073 [1.030, 1.140] & $8.3$ \\
\bottomrule
\end{tabular}
\end{center}
\vspace{-2pt}
\footnotesize $k^{\mathrm{pc}}_{\mathrm{eff}}$, $I_0^{\mathrm{pc}}$ are Corollary~\ref{cor:perclass}'s per-class quantities; Theorem~\ref{thm:main}'s uniform-budget counterparts are zero in all three defended settings. ``gap'' is $\hat I_{\mathrm{MM}}/I_0^{\mathrm{pc}}$ over all $k$ classes---see Sec.~\ref{sec:protocol} for how to read it and Sec.~\ref{sec:gap} for the matched comparison.
\end{table}

\subsection{Where the Gap Comes From}\label{sec:gap}

The bound is conservative, and the gap varies by three orders of magnitude across Table~\ref{tab:main}---not noise. The ratio factors into three effects: an index-set mismatch, the inequality chain, and the choice of estimation space. The first is inactive undefended, leaving $15.0=42.3/2.81$; under FRONT it splits $122$ into $5.84\times20.9$. Fig.~\ref{fig:main}(b) plots matched against full-set ratios.

\emph{Index-set mismatch} comes first. The bound sums only over $\calA^{\mathrm{pc}}$ while $\hat I_{\mathrm{MM}}$ measures all $k$ classes; since the measurement decomposes as the bound does, up to the global Miller--Madow correction (split in proportion to the uncorrected shares), we can restrict it to the same index set. Under FRONT the seven active classes carry $7.4\%$ of the prior mass and $\hat I_{\calA}=0.0762$ bit (95\% CI $[0.0720,0.0807]$), so the matched gap is $20.9\times$, not $122\times$; undefended, where nothing is excluded, it is $15.0\times$. On the same classes FRONT's gap is within $1.4\times$ of the no-defense gap: the apparent order-of-magnitude degradation under defense is an artifact of counting classes the theorem never claimed.

The second source is the \emph{inequality chain}, measured with bound and measurement in the same 3-D observable and with $M$ measured in that space: there the bound is $2.68\times10^{-2}$ bit against $1.136$ measured, a factor $42.3$, and both steps are separately measurable. Applying Pinsker to the \emph{measured} $\TV(P_{Y|i},P_{Y})$ instead of the Lemma~\ref{lem:drift} floor gives $0.756$ bit, so the divergence inequality accounts for $1.50\times$ (a ratio of estimates, not exact divergences---the numerator carries the Miller--Madow correction; $1.54$ without it) and the remaining $28.2\times$ is the step replacing a total variation distance by the mean gap of a \emph{single} $1$-Lipschitz statistic. One caveat is structural: the measured side uses the 125-symbol discretization while the floor is computed on the continuous $Y^{(3)}$. Data processing signs this only partly: $I_c\ge\hat I$, $\TV_c\ge\widehat\TV$, so $42.3$ and $28.2$ are underestimates, while the Pinsker factor---a ratio of two quantities that each grow---is unsigned.

The \emph{estimation space} runs the other way: estimating in $Y^{(11)}$ rather than $Y^{(3)}$ raises the bound by $2.8\times$ net (the richer space lifts $\hat{\bar\Delta}$ by $12.7\times$ in the squared term but admits a $4.5\times$ larger $M^2$), turning the same-space $42.3$ into Table~\ref{tab:main}'s $15.0$. As Sec.~\ref{sec:protocol} notes, that buys a valid bound on a richer observable, not a tighter one on the observable measured. The asymmetry is large and not the one we expected: at least $28\times$ sits in collapsing a multidimensional difference onto one scalar, against a divergence step measured at $1.50\times$---a quantized figure with no proven direction. Only the $28\times$ bounds anything; Sec.~\ref{sec:disc} acts on it.

WTF-PAD shows the mechanism cleanly. Exactly one class survives, clearing its threshold by $0.083$ out of $\dovr=1.343$; the summand being quadratic in that residual, the bound is $2.55\times10^{-5}$ bit---near zero by construction, and the matched gap correspondingly $757\times$. TrafficSliver, with $\hat B_i$ exceeding every $\hat B^*_i$, admits no active class and certifies nothing, while the measurement still gives $0.482$ bit with a strictly positive CI. This is the honest boundary of the sufficient condition: when the per-class cost approaches the OVR spectrum itself, the framework goes silent well before the leakage does.

\subsection{What the OVR Decomposition Buys}\label{sec:pairwise}

The comparison of Sec.~\ref{sec:model} can be priced, giving the pairwise route its strongest form. Writing $x_i=\TV(P_{Y|i},P_Y)$, the true distances satisfy $x_i+x_j\ge t_{ij}$ for \emph{every} pair, $t_{ij}$ the floor Lemma~\ref{lem:drift} gives on the pair $(i,j)$ rather than $(i,\neg i)$---which needs $L_\varphi,\rho,M$ uniform over a per-\emph{pair} family $\{\varphi_{ij}\}$, a stronger hypothesis than our theorem uses, granted here to the baseline. Since $I\ge\frac{2}{\ln2}\sum_i\pi_ix_i^2$ counts each class once, all $\binom{k}{2}$ constraints may be imposed at once without double counting, so the strongest pairwise bound is the convex program $\min_{x\ge0}\frac{2}{\ln2}\sum_i\pi_ix_i^2$ subject to them. On CW it returns $5.21\times10^{-2}$ bit, against $4.47\times10^{-2}$ for a disjoint matching (which discards constraints sharing a class), $6.35\times10^{-3}$ for the single best pair, and $7.55\times10^{-2}$ for the OVR sum: the summation form is $1.45\times$ stronger, not an order of magnitude.

Two things follow. First, the case for OVR is structural: the program names no class and supports neither the cascade nor the procedure of Sec.~\ref{sec:disc}. Second, adding the OVR floors $x_i\ge(1-\pi_i)\rho\delta_i(0)/2M$ to that program returns exactly the OVR value---every pairwise constraint is slack at the OVR optimum by at least $0.028$ in TV---so the pairwise family contributes nothing the per-class decomposition has not already supplied.

\subsection{Does the Bound Rank Classes Correctly?}\label{sec:pred}

Strict positivity is a weak test of a theorem whose content is a \emph{per-class} decomposition; the sharper question is whether $\dovr$ predicts \emph{which} classes leak. Measurement and bound are both indexed by class, so the two vectors can be correlated---provided both live in one feature space. Undefended distinguishability $\hat{\bar\Delta}_i^{\mathrm{ovr}}$---what step~(i) of Sec.~\ref{sec:disc} sorts on---predicts post-defense leakage well: in the same 3-D observable as the MI, Spearman $\rho=0.76$ undefended, $0.62$ under FRONT, $0.77$ under WTF-PAD, 95\% intervals $[0.64,0.85]$, $[0.47,0.74]$, $[0.65,0.84]$ from a bootstrap over the 95 \emph{classes}---between-class variability of the agreement, not within-class estimation error in $\hat{\bar\Delta}_i^{\mathrm{ovr}}$ or the contributions. The per-class \emph{bound term} $\delta_i^{\mathrm{pc}}$ subtracts realized costs, so undefended it \emph{is} $\hat{\bar\Delta}_i^{\mathrm{ovr}}$; under a defense, in the same 3-D space, it is not merely weaker but nearly degenerate---$\rho=0.25$ under FRONT, where only two classes stay above the per-class threshold, undefined under WTF-PAD, where none does. So the bound does more than assert $I>0$, but what transfers across a defense is the undefended ordering, not the defended summand---fortunate for the procedure, and a caution against reading the summand as a leakage predictor.

Substituting Table~\ref{tab:main}'s 11-D estimate while keeping the 3-D measurement perturbs the agreement unsystematically---$0.76\!\to\!0.44$, $0.62\!\to\!0.75$, $0.77\!\to\!0.56$, the two series of Fig.~\ref{fig:main}(c)---because the spaces agree only at $\rho=0.55$. That is mismatch, not a property of any defense: the procedure of Sec.~\ref{sec:disc} is only as good as the ordering it assumes, which should be estimated in the space the adversary is expected to observe.

\subsection{Robustness}\label{sec:robust}

Varying the bin count $k_b\in\{3,\ldots,7\}$ on CW gives $\hat I_{\mathrm{MM}}\in[0.571,1.283]$ bit, strictly positive throughout; the Miller--Madow correction is $0.025$ bit, $2.1\%$ of the estimate, so the positivity conclusion does not rest on the bias correction. Bin edges follow each dataset's own range; re-measuring on the \emph{undefended} edges---one fixed quantizer---gives $0.445\!\to\!0.438$, $0.585\!\to\!0.575$, $0.482\!\to\!0.553$, no sign change and FRONT still lowest. Varying the $\neg i$ subsample over $\{1000,2000,4000\}$ changes $I_0$ by under $0.2\%$ ($0.07549$--$0.07560$ bit), leaving $k_{\mathrm{eff}}=95$. Replacing the $W_1$ estimate of $\dovr$ by the centroid distance in the same 11-D space---one fixed linear statistic instead of the optimal Lipschitz one---costs a factor $1.56$, quantifying the value of the duality-based estimator. No choice here affects a qualitative conclusion.

\section{Discussion}\label{sec:disc}

\textbf{From bound to defense certification.} The results give a procedure needing no attack model. (i)~Estimate $\{\hat{\bar\Delta}_i^{\mathrm{ovr}}\}$ undefended and sort the induced $\{\hat B^*_i\}$. (ii)~For a target $k_0$---``at most $k_0$ classes may remain identifiable''---read $B=B^*_{(k'-k_0)}$ off the staircase of Fig.~\ref{fig:main}(a), $B=0$ if $k_0\ge k'$. It is a floor, not a recipe: spending less provably misses the target, more does not guarantee reaching it. (iii)~After deployment, measure $\hat B_i$ and evaluate Corollary~\ref{cor:perclass}: if $|\calA^{\mathrm{pc}}|>k_0$ the defense fails the target, against an adversary reading the feature map the estimates were made in---no classifier need be trained. That is a converse's operational value: it \emph{falsifies} defenses rather than certifying them. The verdict is only as sound as the instantiation---with $C=0$, $\rho=1$ and plug-in $W_1$ estimates it is an empirical criterion, not a proof.

On our numbers $k_0=10$ needs $\hat B^*_{(85)}=0.284$, $k_0=1$ needs $0.535$: going from ten identifiable classes to one costs nearly as much again as reaching ten. FRONT spends $\max_i\hat B_i=1.201$, nearly twice $\hat B^*_{\max}$, yet step~(iii) returns $|\calA^{\mathrm{pc}}|=7$, missing every target below $k_0=7$.

\textbf{Tightening.} Sec.~\ref{sec:gap} bounds only one of the two steps from below, so the remedy with a guaranteed target is the single-statistic step---aggregating directions, by orthogonal $1$-Lipschitz projections or a sliced-$W_1$ construction. Whether a sharper divergence inequality helps is unsettled, though one candidate is ruled out: swapping Pinsker for Bretagnolle--Huber \cite{tsybakov2009}, $D_{\mathrm{KL}}\ge-\log_2(1-\TV^2)$ in bits, would make matters \emph{worse}: the two cross at $\TV\approx0.893$, and the bound feeds Pinsker the Lemma~\ref{lem:drift} floor, never above $0.474$ ($0.423$ same-space), well inside where Pinsker is stronger.

\textbf{Feature dimension and deep attacks.} Theorem~\ref{thm:main} constrains any observable reached through a Lipschitz map, deep fingerprinting embeddings \cite{sirinam2018deep,deng2025countmamba} included. The measured side is monotone in the observable; the bound is not, since it scales as $(\dovr)^2/M^2$ and a richer space enlarges both---so the $2.8\times$ gained here by the 11-D estimate is specific to this data. What the proxy cannot do is predict a deep attack's absolute accuracy; estimating $W_1$ in an embedding space meets the curse of dimensionality and is left open.

\textbf{Boundary of the conditions.} A single $C$ applies to all classes; with heterogeneous $C_i$, $C=\max_iC_i$ is conservative. Since $\sup_xW_1\le B$ implies $\E_X[W_1]\le B$ but not conversely, a bound over $\calD(B)$ does \emph{not} extend to the family cut out by a global constraint, where one class may be defended far beyond $B$; Corollary~\ref{cor:perclass} on realized costs covers that, no a fortiori argument does. Two earlier claims deserve confronting. The ``centroid class''---$\dovr=0$, inactive at any budget---does not occur here: the smallest $\hat{\bar\Delta}_i^{\mathrm{ovr}}$ is $61\%$ of the mean in $Y^{(11)}$ and $52\%$ in $Y^{(3)}$; a real gap, but unexercised. And $\Theta(1)$ concerns a limit that $k=95$ and $k=100$ cannot resolve; we verify the summation structure, not its asymptotics. Optimal budget allocation across classes, which FRUGAL \cite{wang2026frugal} approaches heuristically, is also open.

\section{Conclusion}\label{sec:conclusion}

We extended side-channel leakage inevitability from the binary undefended setting to $k$-class defended traffic analysis, through a summation-form bound, kept nonvacuous by a per-class budget corollary where the uniform-budget bound is empty, a cascade of per-class critical costs, and an accuracy corollary with a prior-free per-class counterpart.

Empirically the summation form is $1.45\times$ stronger than the best pairwise construction, a modest margin: what the decomposition buys is the per-class certificate, not the magnitude. The measured MI has a strictly positive $95\%$ confidence lower bound under every defense, undefended OVR distinguishability predicts post-defense leakage at Spearman $\rho=0.62$--$0.77$, and the framework transfers unchanged across the QUIC and TCP stacks. The bound stays conservative, but accountably so: the ratio factors into class exclusion, the inequality chain, and the estimation space, and matching the index set alone reduces FRONT's $122\times$ gap to $21\times$ against $15\times$ undefended. The limit of the sufficient condition is visible too: under TrafficSliver no class is active while $0.482$ bit is still measurable.

For system design the theory supplies two attack-model-free metrics: $k^{\mathrm{pc}}_{\mathrm{eff}}$, how many classes a deployed defense still leaves certifiably identifiable, and $B^*_{\max}$, a budget floor rather than a sufficient spend---FRONT exceeds it yet seven classes remain above the per-class criterion. Aggregating Lipschitz directions to tighten Lemma~\ref{lem:drift}, allocating a budget across classes, and pairing this converse with the achievability side \cite{liu2026ratedistortion,wang2026frugal} into a sandwich characterization are natural next steps.

\section*{Acknowledgment}
This work was supported by the National Natural Science Foundation of China Joint Fund Integration Project (No.~U2436601). We thank the anonymous reviewers, whose request to explain rather than report the theory--measurement gap led directly to Sec.~\ref{sec:gap}.


\begin{thebibliography}{00}

\bibitem{liu2026inevitability}
G.~Liu, G.~Cheng, and W.~Liu, ``The inevitability of side-channel leakage in encrypted traffic,'' \textit{Acta Electronica Sinica}, vol.~54, no.~2, pp.~837--850, 2026.

\bibitem{liu2026ratedistortion}
G.~Liu, G.~Cheng, W.~Liu, and Y.~Wang, ``Rate-distortion function for encrypted traffic side-channel defense,'' \textit{Sci. Sin. Inform.}, 2026, in press, doi:10.1360/SSI-2026-0081.

\bibitem{sirinam2018deep}
P.~Sirinam, M.~Imani, M.~Juarez, and M.~Wright, ``Deep fingerprinting: undermining website fingerprinting defenses with deep learning,'' in \textit{Proc. ACM CCS}, 2018, pp.~1928--1943.

\bibitem{zhao2024finegrained}
X.~Zhao, X.~Deng, Q.~Li, \textit{et al.}, ``Towards fine-grained webpage fingerprinting at scale,'' in \textit{Proc. ACM CCS}, 2024.

\bibitem{deng2025countmamba}
X.~Deng, R.~Zhao, Y.~Wang, \textit{et al.}, ``CountMamba: a generalized website fingerprinting attack via coarse-grained representation and fine-grained prediction,'' in \textit{Proc. IEEE S\&P}, 2025, pp.~1419--1437.

\bibitem{shen2023machine}
M.~Shen, K.~Ye, X.~Liu, \textit{et al.}, ``Machine learning-powered encrypted network traffic analysis: a comprehensive survey,'' \textit{IEEE Commun. Surveys Tuts.}, vol.~25, no.~1, pp.~791--824, 2023.

\bibitem{hasselquist2024raisingbar}
D.~Hasselquist, E.~Witwer, A.~Carlson, \textit{et al.}, ``Raising the bar: improved fingerprinting attacks and defenses for video streaming traffic,'' \textit{Proc. Privacy Enhancing Technol.}, vol.~2024, no.~4, pp.~167--184, 2024.

\bibitem{shen2024palette}
M.~Shen, K.~Ji, J.~Wu, \textit{et al.}, ``Real-time website fingerprinting defense via traffic cluster anonymization,'' in \textit{Proc. IEEE S\&P}, 2024, pp.~3238--3256.

\bibitem{wang2026frugal}
R.~Wang, Z.~Ling, G.~Liu, \textit{et al.}, ``Cease at the ultimate goodness: towards efficient website fingerprinting defense via iterative mutual information minimization,'' in \textit{Proc. NDSS}, 2026.

\bibitem{cover2006elements}
T.~M.~Cover and J.~A.~Thomas, \textit{Elements of Information Theory}, 2nd~ed.\hskip 1em plus 0.5em minus 0.4em Wiley, 2006.

\bibitem{li2018measuring}
S.~Li, H.~Guo, and N.~Hopper, ``Measuring information leakage in website fingerprinting attacks and defenses,'' in \textit{Proc. ACM CCS}, 2018, pp.~1977--1992.

\bibitem{cherubin2017bayes}
G.~Cherubin, ``Bayes, not na\"ive: security bounds on website fingerprinting defenses,'' \textit{Proc. Privacy Enhancing Technol.}, vol.~2017, no.~4, pp.~215--231, 2017.

\bibitem{gong2020front}
J.~Gong and T.~Wang, ``Zero-delay lightweight defenses against website fingerprinting,'' in \textit{Proc. USENIX Security}, 2020, pp.~717--734.

\bibitem{juarez2016wtfpad}
M.~Juarez, M.~Imani, M.~Perry, C.~Diaz, and M.~Wright, ``Toward an efficient website fingerprinting defense,'' in \textit{Proc. ESORICS}, LNCS 9878, 2016, pp.~27--46.

\bibitem{cadena2020trafficsliver}
W.~De~la~Cadena, A.~Mitseva, J.~Pennekamp, \textit{et al.}, ``TrafficSliver: fighting website fingerprinting attacks with traffic splitting,'' in \textit{Proc. ACM CCS}, 2020.

\bibitem{wang2017walkietalkie}
T.~Wang and I.~Goldberg, ``Walkie-Talkie: an efficient defense against passive website fingerprinting attacks,'' in \textit{Proc. USENIX Security}, 2017, pp.~1375--1390.

\bibitem{smith2021quic}
J.-P.~Smith, P.~Mittal, and A.~Perrig, ``Website fingerprinting in the age of QUIC,'' \textit{Proc. Privacy Enhancing Technol.}, vol.~2021, no.~2, pp.~48--69, 2021.

\bibitem{tsybakov2009}
A.~B.~Tsybakov, \textit{Introduction to Nonparametric Estimation}.\hskip 1em plus 0.5em minus 0.4em Springer, 2009.

\end{thebibliography}
\end{document}